\numberwithin{equation}{section}
\theoremstyle{plain}
\newtheorem{theorem}{Теорема}
\theoremstyle{remark}
\def\sgrad{\mathrm{sgrad}\,}
\def\trace{\mathrm{trace}\,}
\newcommand{\rank}{\mathop{\rm rank}\nolimits}
\begin{document}

\title{Phase Topology of Two Vortices of the Identical Intensities in Bose-Einstein Condensate}

\author{P.\,E.~Ryabov, S.\,V.~Sokolov}

\date{}

\maketitle

\begin{abstract}
Completely Liouville integrable Hamiltonian system with two deg\-rees of freedom, describing the dynamics of two vortex filaments in a Bose- Einstein condensate enclosed in a cylindrical trap, is considered. For the system of two vortices with identical intensities detected bi\-fur\-cation of three Liouville tori into one. Such a bifurcation was found in the integrable case of Goryachev-Chaplygin-Sretensky in the rigid body dynamics.

\vspace{3mm}

Key words: vortex dynamics, Bose-Einstein condensate, completely integrable Hamiltonian systems, bifurcation diagram of momentum mapping, bifurcations of Liouville tori

\end{abstract}
{

\parindent=0mm

UDC 532.5.031, 517.938.5

MSC 2010: 76M23, 37J35,  37J05, 34A05

------------------------------------------------------------

Received on 31 December 2018.

------------------------------------------------------------

The work of P.\,E.\,Ryabov was supported by RFBR grants 16-01-00170
and 17-01-00846. The work of S.\,V.\,Sokolov was carried out at MIPT under project 5–100 for state
support for leading universities of the Russian Fede\-ration and also partially support by RFBR grants
16-01-00809, 16-01-00170 and 18-01-00335.

------------------------------------------------------------

Ryabov Pavel Evgen`evich

PERyabov@fa.ru

Financial University under the Government of the Russian Federation\\
Leningradsky prosp. 49, Moscow, 125993 Russia

Institute of Machines Science, Russian Academy of Sciences\\
Maly Kharitonyevsky per. 4, Moscow, 101990 Russia

Udmurt State University\\
ul. Universitetskaya 1, Izhevsk, 426034 Russia

---------------------------------------------------------------

Sokolov Sergei Victorovich

sokolovsv72@mail.ru

Moscow Institute of Physics and Technology (State University)\\
9 Institutskiy per., Dolgoprudny, Moscow Region, 141701, Russian Federation

Institute of Machines Science, Russian Academy of Sciences\\
Maly Kharitonyevsky per. 4, Moscow, 101990 Russia

}

\newpage

\tableofcontents

\section{Introduction}
\setcounter{equation}{0}

The mainstream of the vortex analytical dynamics is an integrable models of point vortices on a plane. Studies of the dynamics of vortices in a quantum liquids, have shown that quantum vortices behave approximately the same as thin vortex filaments in a classical perfect fluids. A special place is occupied by the vortex structures in the Bose\,-- Einstein condensate obtained for ultracold atomic gases \cite{fett2009}. This article will be concerned with a mathematical model of the dynamics of two vortex filaments in a Bose\,-- Einstein condensate enclosed in a harmonic trap  \cite{kevrekPhysLett2011}, \cite{kevrek2013}, \cite{kevrikidis2014}. This model leads to a completely Liouville integrable Hamiltonian system with two degrees of freedom, and for this reason, topological methods used in such systems can be applied. Topological methods were successfully used for investigation of the stability problem of absolute and relative choreographies \cite{borkil2000}, \cite{BorMamSokolovskii2003}, \cite{bormamkil2004}, \cite{kilinbormam2013}, \cite{BorSokRyab2016}. These motions in integrable models, as a rule, correspond to the values of the constant first integrals, for which the integrals, considered as functions of phase variables, turn out to be dependent in the sense of the linear dependence of the differentials. The main role in the study of such dependence is played by the bifurcation diagram of the momentum map.

This publication is devoted to the integrable perturbation of the model considered in \cite{SokRyabRCD2017}, \cite{sokryab2018}. In this paper, the bifurcation diagram is explicitly determined and bifurcations of Liouville tori are investigated. In the case of a vortex pair consisting of vortices of the identical intensity, a bifurcation of three tori into one is detected for some values of physical parameters. This bifurcation was previously encountered in the works of M.\,P.~Kharlamov in studying the phase topology of the integrable Chaplygin-Goryachev-Sretensky case in the dynamics of a rigid body \cite{Kharlamov1988} and as one of the features in the form of a 2-atom of a singular layer of Liouville foliation in the works of A.\,T.~Fomenko, A.\,V.~Bolsinov, S.\,V.~Matveev \cite{bolsmatvfom1990}. In the work of A.\,A.~Oshem\-kov and M.\,A.~Tuzhilin \cite{oshtuzh2018}, devoted to the splitting of saddle singularities, such a bifurcation turned out to be unstable and its perturbed foliations, one of which is realized in the integrable model under consideration, are given.

\section{Model and Definitions}
Here we following the original works \cite{kevrekPhysLett2011}, \cite{kevrek2013},
\cite{kevrikidis2014} in the description of the model.
Let us consider $N$ interacting vortices in a Bose-Einstein condensate enclosed in a harmonic trap and let $(x_k,y_k)$ is the position of the $k$-th vortex, $r_k=\sqrt{x_k^2+y_k^2}$.  A single vortex $(x_k,y_k)$ in a harmonic trap is well known to precess around the center of the trap with  the frequency $\omega_{\rm{pr}}$ which can be approximated by
$\omega_{\textrm{pr}}=\omega_{\textrm{pr}}^{0}/(1-r_k^2/R_{\textrm TF}^2)$, where the frequency at the
trap center is $\omega_{\textrm{pr}}^{0}=\ln\bigl(A\frac{\mu}{\Omega}\bigr)/R_{\textrm{TF}}^2$, $\mu$ is the chemical potential, $R_{\textrm{TF}}=\sqrt{2\mu}/\Omega$ is the so-called Thomas-Fermi~(TF) radius, $A=2\sqrt{2}\pi$ is a numerical constant, and $\Omega=\omega_r/\omega_z$, here $\omega_r$ and $\omega_z$ are the confining radial and axial frequencies of the harmonic trap, respectively. On the other hand, in the absence of a harmonic trap, two interacting vortices will rotate around each other with a frequency of $\omega_{\textrm{vort}} = B/r_{kj}^2$, where $r_{kj}=\sqrt{(x_k-x_j)^2+(y_k-y_j)^2}$ is the distance between the vortices and $B$ is a constant factor. If $(x_k,y_k)$ is the position of the $k$-th vortex, the corresponding the mathematical model of the dynamics of $N$ interacting vortices in a Bose-Einstein condensate (BEC) enclosed in a harmonic trap is described by following the system of differential equations \cite{kevrekPhysLett2011}, \cite{kevrek2013},
\cite{kevrikidis2014}:
\begin{equation}
\label{d0}
\begin{array}{l}
\displaystyle{\dot x_k=-\Gamma_k\omega_{\textrm{pr}}y_k-\frac{B}{2}\sum\limits_{j\neq k}^{N}\,\Gamma_j\frac{y_k-y_j}{r_{kj}^2},}\\[3mm]
\displaystyle{\dot y_k=\Gamma_k\omega_{\textrm{pr}}x_k+\frac{B}{2}\sum\limits_{j\neq k}^{N}\,\Gamma_j\frac{x_k-x_j}{r_{kj}^2},}
\end{array}
\end{equation}
where $\Gamma_k$ is the charge of the $k$-th vortex, ($k=1,\ldots,N$) and $N$ is the total
number of interacting vortices.

For convenience, following to the paper \cite{kevrikidis2014} we can further rescale time to the period of the single vortex
precessing near the center of the trap ($\tau=t\omega_{\textrm pr}^0$) and pass to dimensionless variables
using the relations
\begin{equation}
\label{d1}
x_k=\tilde{x}_kR_{\textrm TF},\quad y_k=\tilde{y}_kR_{\textrm TF}.
\end{equation}

After rescaling \eqref{d1}, the equations of motion \eqref{d0} are written as
\begin{equation}
\label{d2}
\begin{array}{l}
\displaystyle{x_k^\prime=-\Gamma_k\frac{y_k}{1-r_k^2}-c\sum\limits_{j\neq k}^{N}\,\Gamma_j\frac{y_k-y_j}{r_{kj}^2},}\\[3mm]
\displaystyle{y_k^\prime=\Gamma_k\frac{x_k}{1-r_k^2}+c\sum\limits_{j\neq k}^{N}\,\Gamma_j\frac{x_k-x_j}{r_{kj}^2},}
\end{array}
\end{equation}
where the dimensionless parameter $c$ is defined by the formula
\begin{equation*}
c=\frac{B}{2\ln\bigl(A\frac{\mu}{\Omega}\bigr)}
\end{equation*}
and the prime $()^\prime$ in \eqref{d2} stands for $\tfrac{d}{d\tau}$.

The equations of motion \eqref{d2} can be represented in Hamiltonian form
\begin{equation}
\label{x1}
\Gamma_k x_k^\prime=\frac{\partial H}{\partial y_k}, \quad \Gamma_k y_k^\prime=-\frac{\partial H}{\partial x_k},\quad k=1,\ldots,N.
\end{equation}
with Hamiltonian
\begin{equation}
\label{x0}
\displaystyle{H=\frac{1}{2}\sum\limits_{k=1}^N\,\Gamma_k^2\ln(1-r_k^2)-\frac{c}{2}\sum\limits_{k=1}^N\sum\limits_{j<k}^N\,
\Gamma_k\Gamma_j\ln(r_{kj}^2).}
\end{equation}

In this paper follows hereafter, we restrict our study to the case  with two vortices ($N=2$)  of the identical intensities ($\Gamma_1=\Gamma_2=1$). In this case, the Hamiltonian \eqref{x0} will be written as
\begin{equation}
\label{x2}
\displaystyle{H=\frac{1}{2}\ln[1-(x_1^2+y_1^2)]+\frac{1}{2}\ln[1-(x_2^2+y_2^2)]-\frac{c}{2}\ln[(x_2-x_1)^2+(y_2-y_1)^2].}
\end{equation}

In the works \cite{SokRyabRCD2017}, \cite{sokryab2018} parameter $c$ was taken to be equal to one, however in a series of physical works \cite{kevrekPhysLett2011}, \cite{kevrek2013} for the parameter $c$ in the case of a vortex pair consisted of vortices with identical intensities ($\Gamma_1=\Gamma_2=1$) other values of $c=1.35$; $c=0.1$ were taken on the basis of experimental data. In this connection, it is of interest to study the phase topology, when the parameter $c$ takes any positive values.

The phase space $\cal P$ is given as a direct product of two open circles of radius $1$, with a set of vortex collisions punctured out
\begin{equation*}
\label{x3}
{\cal P}=\{(x_1,y_1,x_2,y_2)\,:\, x_1^2+y_1^2<1, x_2^2+y_2^2<1\}\smallsetminus \{x_1=x_2, y_1=y_2\}.
\end{equation*}
The Poisson structure on the phase space $\cal P$ is given in the standard form
\begin{equation}
\label{x4}
\{x_i,y_j\}=\frac{1}{\Gamma_i}\delta_{ij},
\end{equation}
where $\delta_{ij}$ is the Kronecker symbol.

System \eqref{x1} admits one additional first integral of motion -- \textit{moment of vorticity} ($N=2$, $\Gamma_1=\Gamma_2=1$)
\begin{equation}
\label{x5}
F=\Gamma_1(x_1^2+y_1^2)+\Gamma_2(x_2^2+y_2^2).
\end{equation}

The function $F$ with the Hamiltonian $H$ form on $\cal P$ a complete involutive set of integrals of the system \eqref{x1} ($N=2,\Gamma_1=\Gamma_2=1$). According to the Liouville-Arnold theorem, the regular surface of the common level of first integrals is a disconnected union of two-dimensional tori filled with conditionally periodic trajectories. We define \textit{integral mapping} ${\cal F}\,:\, {\cal P}\to {\mathbb R}^2$, setting $(f, h)={\cal F} (\boldsymbol\zeta)=(F(\boldsymbol\zeta), H(\boldsymbol\zeta))$. The mapping $\cal F$ is also called \textit{momentum map}. Denote by $\cal C$ the set of all critical points of the momentum maps, that is, points at which $\rank d{\cal F}(x) < 2$. The set of critical values $\Sigma = {\cal F}({\cal C}\cap {\cal P})$ is called \textit{bifurcation diagram}.

\section{Bifurcation Diagram}
To find the bifurcation diagram, we use the method of critical subsystems developed by M.\,P.~Kharlamov in integrable problems of the rigid body dynamics \cite{KharlamovRCD2014}. As an application, an analysis of the stability of critical trajectories (that is, nondegenerate singularities of rank 1 of the momentum map) is given by determining the type of trajectory (elliptic / hyperbolic) for each curve from the bifurcation set.

In later we confine to identical intensities ($\Gamma_1=\Gamma_2=1$).
Let ${\cal N}_1$ denote the closure of the set of solutions of the system
\begin{equation}
\label{y0}
x_1+x_2=0;\quad y_1+y_2=0
\end{equation}
and let ${\cal N}_2$ denote the closure of the set of solutions of the system
\begin{equation}
\label{y2}
F_1 = 0,\quad F_2 = 0,
\end{equation}
where
\begin{equation*}
\label{y1}
\begin{array}{l}
F_1=x_1y_2-y_1x_2,\\[3mm]
F_2=(x_1^2+x_2^2)(x_2^2+y_2^2)[x_1(x_2^2+y_2^2)-cx_2]+x_2[(c-2)(x_2^2+y_2^2)^2x_1^2+cx_2^2].
\end{array}
\end{equation*}

Then the theorem is valid.

\begin{theorem}
\label{t1}
A set $\cal C$ of critical points of the momentum map $\cal F$ is exhausted by a set of solutions
of the collection of systems \eqref{y0} and \eqref{y2}. The sets ${\cal N}_1$ and ${\cal N}_2$ are two-dimensional invariant
submanifolds of the system \eqref{x1} with the Hamiltonian \eqref{x2}.
\end{theorem}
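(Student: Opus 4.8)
The plan is to characterize $\mathcal C$ as the locus where $dF$ and $dH$ are proportional, equivalently where all six $2\times2$ minors of the Jacobian of $\mathcal F=(F,H)$ vanish, and then to factor those minors. I would first record the gradients. With $P=1/(1-r_1^2)$, $Q=1/(1-r_2^2)$, $R=c/r_{12}^2$ one has $\partial_{x_1}H=-Px_1-R(x_1-x_2)$, $\partial_{x_2}H=-Qx_2+R(x_1-x_2)$, and analogously for the $y$-derivatives, while $\nabla F=2(x_1,y_1,x_2,y_2)$. The decisive first step is the minor in the columns $(x_1,y_1)$: one computes $x_1\,\partial_{y_1}H-y_1\,\partial_{x_1}H=cF_1/r_{12}^2$, so this minor equals $2cF_1/r_{12}^2$ and the $(x_2,y_2)$-minor is its negative. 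Since $c\neq0$ and $r_{12}^2\neq0$ on $\mathcal P$, every critical point satisfies $F_1=0$; hence $\mathcal C\subseteq\{F_1=0\}$, the set on which the two position vectors are collinear.

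Next I would restrict to $\{F_1=0\}$ and parametrize the collinear configurations by $(x_2,y_2)=t(x_1,y_1)$ with $(x_1,y_1)\neq0$; the case of a vortex sitting at the centre is handled separately and, because $R\neq0$, yields no critical points. On this locus $r_2^2=t^2r_1^2$ and $r_{12}^2=(t-1)^2r_1^2$, and the four remaining minors become proportional to the single scalar factor
$$\Phi=(P-Q)\,t+R\,(1-t^2),$$
being proportional to $x_1^2\Phi$, $y_1^2\Phi$ and, for the two mixed minors, $x_1y_1\Phi$. Thus on $\{F_1=0\}$ criticality is equivalent to $\Phi=0$. Putting $\Phi$ over a common denominator gives the factorization $\Phi=(1+t)\,\Psi$: the factor $1+t$ produces $t=-1$, i.e. $(x_2,y_2)=-(x_1,y_1)$, which is exactly system \eqref{y0} defining $\mathcal N_1$, while clearing denominators in $\Psi=0$ yields a polynomial relation that, in the original variables and up to the nowhere-vanishing factor $x_1^3t^3$, is the relation $F_2=0$ of system \eqref{y2}. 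This establishes the inclusion $\mathcal C\subseteq\mathcal N_1\cup\mathcal N_2$ claimed in the theorem.

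To prove invariance, for $\mathcal N_1$ I would exhibit it as the fixed-point set of the involution $\sigma:(x_1,y_1,x_2,y_2)\mapsto(-x_2,-y_2,-x_1,-y_1)$, which preserves $F$ and $H$ and is canonical for the bracket \eqref{x4}; the fixed-point set of such a symmetry is automatically preserved by the flow. Equivalently, a direct substitution in \eqref{d2} shows $\tfrac{d}{d\tau}(x_1+x_2)$ and $\tfrac{d}{d\tau}(y_1+y_2)$ vanish on $\mathcal N_1$, the interaction terms cancelling identically and the precession terms because $r_1=r_2$ there. For $\mathcal N_2$ I would check that the Poisson brackets $\{F_1,H\}$ and $\{F_2,H\}$ lie in the ideal generated by $F_1$ and $F_2$; in the parametrization above $\{F_1,H\}$ factors as $(t^2-1)$ times an expression that vanishes exactly when $\Psi=0$, so it vanishes on $\mathcal N_1$ and on the critical stratum of $\mathcal N_2$ alike.

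The main obstacle is the algebra of the second step: confirming that the four minors genuinely share the common factor $\Phi$ and, above all, matching the cleared form of $\Psi$ to the prescribed polynomial $F_2$ through an identity of the shape $F_2=x_1^3t^3\,[\,t\,r_1^4(1-t)^2+c(1-r_1^2)(1-t^2r_1^2)\,]$. A secondary difficulty is the careful treatment of the degenerate configurations — a centred vortex, collinear points with $x_1=0$, and the closure in \eqref{y2}. Indeed the naive solution set of \eqref{y2} carries extraneous strata (for instance the plane $\{x_2=y_2=0\}$, which solves \eqref{y2} yet is neither critical nor flow-invariant), so $\mathcal N_2$ must be read as the closure of the genuine critical stratum; this is precisely why the theorem asserts only that $\mathcal C$ is exhausted by, i.e. contained in, $\mathcal N_1\cup\mathcal N_2$, and verifying $\{F_2,H\}\in(F_1,F_2)$ on that stratum is the most laborious part of the invariance claim.
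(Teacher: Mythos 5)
Your proposal is correct and follows essentially the same route as the paper: both characterize $\mathcal C$ by the vanishing of the $2\times2$ minors of the Jacobian of $\mathcal F$ and establish invariance of $\mathcal N_1$ and $\mathcal N_2$ through the Poisson brackets $\{x_1+x_2,H\}$, $\{y_1+y_2,H\}$, $\{F_1,H\}$, $\{F_2,H\}$, exactly as in the paper's chains of equalities. Your version merely makes explicit the computations the paper dismisses as ``direct calculation'' --- the two minors proportional to $F_1$, the common factor $\Phi=(1+t)\Psi$ on the collinear locus, and the identity $F_2=x_1^3t^3\bigl[t\,r_1^4(1-t)^2+c(1-r_1^2)(1-t^2r_1^2)\bigr]$, all of which check out --- and your caveat that the literal solution set of \eqref{y2} contains the extraneous, non-critical and non-invariant plane $x_2=y_2=0$ (so that $\mathcal N_2$ must be read as the closure of the genuine critical stratum, and the factor $x_1^3t^3$ is not in fact nowhere-vanishing) is a fair point the paper glosses over.
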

\begin{proof}
To prove the first statement of the theorem, we need to find points of phase space at which
the rank of the map is not maximal. By direct calculations it can be verified that the Jacobi matrix
of the momentum map has zero minors of the second order at points $\zeta \in {\cal P}$ whose coordinates satisfy
the equations of the system \eqref{y0} and \eqref{y2}, whence ${\cal C} = {\cal N}_1\cup{\cal N}_2$. The invariance of relations ${\cal N}_1$ and ${\cal N}_2$ can be verified by using the following chains of equalities:
\begin{equation*}
\begin{array}{l}
(x_1+x_2)^{\prime}=\{x_1+x_2,H\}=\sigma_1(y_1+y_2)=0;\\
(y_1+y_2)^{\prime}=\{y_1+y_2,H\}=\sigma_2(x_1+x_2)=0;\\
{F_1}^\prime=\{F_1,H\}\bigl|_{F_1=0}=\sigma_3 F_2;\quad {F_2}^\prime=\{F_2,H\}\bigl|_{F_1=0}=\sigma_4 F_2,
\end{array}
\end{equation*}
where $\sigma_k$ are some functions of the phase variables.
\end{proof}

To determine the bifurcation diagram $\Sigma $, it is convenient to go to the polar coordinates
\begin{equation*}
\label{y4}
x_1 = r_1\cos\theta_1,\quad y_1 = r_1\sin\theta_1,\quad
x_2 = r_2\cos\theta_2,\quad y_2 = r_2\sin\theta_2.
\end{equation*}

Substitution in \eqref{y0} and \eqref{y2} leads to the equation $\sin(\theta_1-\theta_2)=0 $, i.e. $\theta_1-\theta_2=0$ and $\theta_1-\theta_2=\pi$. The first possibility, unlike the dynamics of two vortex intensities of opposite signs \cite{SokRyabRCD2017}, is not realized for any positive values of the parameter $c$. For the second case, i.e. when $\theta_1=\theta_2+\pi$, we obtain
\begin{equation}
\label{r1}
\left\{\begin{array}{l}
\theta_1=\theta_2+\pi;\\
\left[\begin{array}{l}
r_1=r_2;\\
(r_1^2+r_2^2)(r_1r_2+c)-(c-2)r_1^2r_2^2-c=0.
\end{array}\right.
\end{array}\right.
\end{equation}

The last equation of the system \eqref{r1} can be rewritten as
\begin{equation}
\label{r2}
-r_1r_2(r_1+r_2)^2+c(1-r_1^2)(1-r_2^2)=0.
\end{equation}
In such a form it coincides with an equation in paper \cite{kevrek2013} on p.~225301-2, derived entirely from other considerations. Our conclusion thus explains that the \cite{kevrek2013} equation on p.~225301-2 determines the radii for critical vortex motions.

The corresponding bifurcation diagram $\Sigma$ is defined on the plane
$\mathbb R^2(f, h)$ and consists of two curves $\gamma_1$ and $\gamma_2$, where
\begin{equation}\label{x2_5}
\begin{array}{l}
\displaystyle{\gamma_1: h=\ln\Bigl(1-\dfrac{f}{2}\Bigr)-\frac{c}{2}\ln(2f),\quad 0<f<2;}\\[3mm]
\gamma_2: \left\{
\begin{array}{l}
\displaystyle{h=\frac{1}{2}\ln\left[\frac{s^2(s-1)}{c+s-1}\right]-\frac{1}{2}c\ln\left[\frac{cs^2}{c+s-1}\right],}\\[3mm]
\displaystyle{f=\frac{cs^2-2(s-1)(c+s-1)}{c+s-1},}
\end{array}\right.\qquad s\in \left(1;\frac{2(1+\sqrt{c})}{2+\sqrt{c}}\right].
\end{array}
\end{equation}
For the values of the phy\-si\-cal pa\-ra\-me\-ter $c>3 $, the curve $\gamma_2$ has a cusp at  $s=\tfrac{\bigl[2-c+\sqrt{c(c-2)}\bigr](c-1)}{c-2}$, which coincides with the point of tangency, when $c=3$ and $s=\tfrac{2(1+\sqrt{c})}{2+\sqrt{c}}$.

As an application, we investigate the  stability of critical trajectories that lie in the preimage of the bifurcation curves \eqref{x2_5}. In this case, it suffices to determine the type (elliptic/hyperbolic) at any one point $(f, h)$ of the smooth branch curve $\Sigma$ \cite{BolBorMam1}.

The type of the critical point $x_0$ of rank $1$ in an integrable system with two degrees of freedom can be calculated with the following
way. You must specify the first integral $F$, such that $dF(x_0)=0$ and
$ dF \ne 0$ in a neighborhood of this point. The $x_0$ is a
fixed point for the Hamiltonian vector field $\sgrad F$ and it is possible to calculate the linearization of this field at a given point - the operator $A_F$ at the point $x_0$. This operator will have two zero eigenvalues, the remaining factor of the characteristic polynomial is $\mu^2-C_F$, where $C_F=\frac {1}{2}\trace(A_F ^ 2)$. When $C_F<0 $ we get a point of the type ``center'' (the corresponding periodic solution is elliptic, it is a stable periodic solution in phase space, the limit of the concentric family of two-dimensional regular tori), and for $C_F>0$ we get a point of the type ``saddle'' (the corresponding periodic solution has a hyperbolic type, there are motions asymptotic to this solution lying on two-dimensional separatrix surfaces). Here we present explicit expressions for $C_F$ only for bifurcation curves $\gamma_1$ and $\gamma_2$:
\begin{equation*}\label{x2_6}
\begin{array}{l}
\gamma_1: C_F=(4-c)f^2+4cf-4c,\quad 0<f<2;\\
\gamma_2: C_F=(c-2)s^2+2(c-1)(c-2)s-2(c-1)^2, \quad s\in \left(1;\frac{2(1+\sqrt{c})}{2+\sqrt{c}}\right].
\end{array}
\end{equation*}

Fig.~\ref{fig1} shows an enlarged fragment of the bifurcation diagram in the case of the the identical
intensities  while the parameter $c>3$. The signs $``+``$ and $``-``$ correspond to elliptic (stable) and hyperbolic (unstable) periodic solutions in the phase space. As expected, the type change occurs at the cusp $A$ and at the point of tangency $B$ of the bifurcation diagram $\Sigma$.

\begin{figure}[!ht]
\centering
\includegraphics[width=0.7\textwidth]{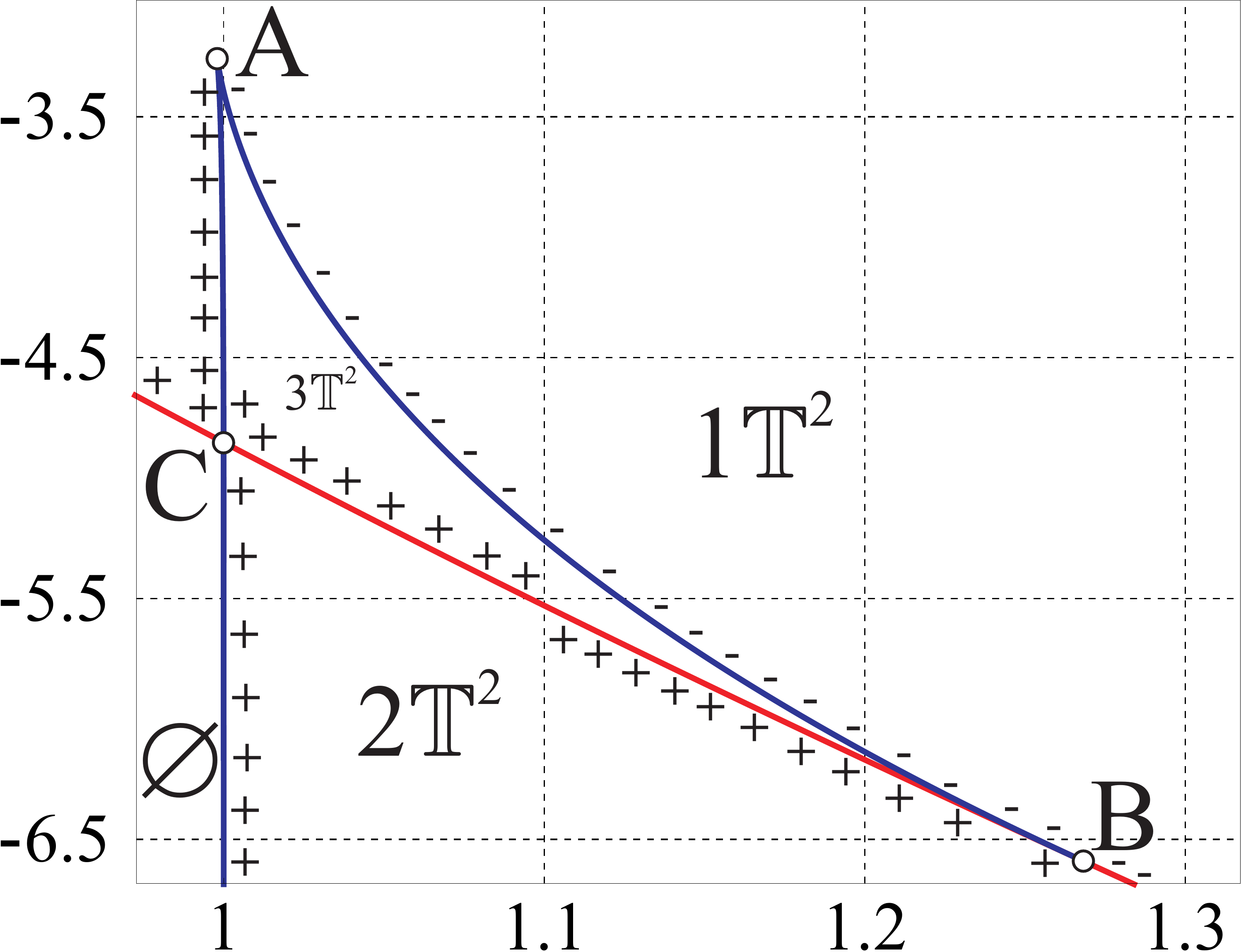}
\caption{Enlarged fragment of the bifurcation diagram $\Sigma$.}
\label{fig1}
\end{figure}

\section{Bifurcation $3\mathbb T^2 \to \mathbb S^1\times\left(\mathbb S^1\,\dot{\cup}\,\mathbb S^1\,\dot{\cup}\,\mathbb S^1\right)\to \mathbb T^2$}
Here we confine to identical intensities ($\Gamma_1=\Gamma_2=1$). Perform an explicit reduction to a system with one degree of freedom. For this, in the system \eqref{x1} ($N=2$) with the Hamiltonian \eqref{x2}, we change phase variables $(x_k, y_k)$ by the new variables $(u,v,\alpha)$ using the formulas:
\begin{equation*}\label{z1}
\begin{array}{l}
x_1=u\cos(\alpha)-v\sin(\alpha),\quad y_1=u\sin(\alpha)+v\cos(\alpha),\\[3mm]
x_2=\sqrt{f-u^2-v^2}\cos(\alpha),\quad y_2=\sqrt{f-u^2-v^2}\sin(\alpha).
\end{array}
\end{equation*}

The physical variables $(u,v)$ are the Cartesian coordinates of one of the vortices in the coordinate system associated with another vortex rotating around the center of vorticity. The choice of such variables is suggested by the presence of the integral of the moment of vorticity \eqref{x5}, which is invariant under the rotation group $SO(2)$. The existence of a one-parameter symmetry group allows one to reduce to a system with one degree of freedom, just as it does in mechanical systems with symmetry \cite{Kharlamov1988}. Backward replacement
\begin{equation*}
U=\frac{x_1x_2+y_1y_2}{\sqrt{x_2^2+y_2^2}},\quad V=\frac{y_1x_2-x_1y_2}{\sqrt{x_2^2+y_2^2}}
\end{equation*}
leads to canonical variables with respect to the bracket \eqref{x4} ($\Gamma_1=\Gamma_2=1$):
\begin{equation*}
\{U,V\}=-\{V,U\}=1,\quad \{U,U\}=\{V,V\}=0.
\end{equation*}

The system with respect to the new variables $(u,v)$ is Hamiltonian
\begin{equation}\label{z2}
u^\prime=\frac{\partial H_1}{\partial v},\quad
v^\prime=-\frac{\partial H_1}{\partial u}
\end{equation}
with Hamiltonian
\begin{equation}
\label{z3}
\begin{array}{l}
\displaystyle{H_1=\frac{1}{2}\ln(1-u^2-v^2)+\frac{1}{2}\ln(1-f+u^2+v^2)-\frac{1}{2}c\ln[f-2\sqrt{f-u^2-v^2}\,u].}
\end{array}
\end{equation}

The rotation angle $\alpha(t)$ of the rotating coordinate system satisfies the differential equation
\begin{equation*}
\displaystyle{\alpha^\prime=\frac{R(u,v)}{Q(u,v)},}
\end{equation*}
where
\begin{equation*}
\begin{array}{l}
R(u,v)=2\bigl\{\sqrt{f-u^2-v^2}[b(1-f+u^2+v^2)+f]-u[(b-2)(u^2+v^2-f)+b]\bigr\},\\[3mm]
Q(u,v)=\sqrt{f-u^2-v^2}(1-f+u^2+v^2)[f-2\sqrt{f-u^2-v^2}u].
\end{array}
\end{equation*}

The fixed points of the reduced system \eqref{z2} are determined by the critical points of the reduced Hamiltonian \eqref{z3} and correspond to the relative equilibria of vortices in the system \eqref{x1}. For a fixed value of the integral of the moment of vorticity $f$, the regular levels of the reduced Hamiltonian are compact and motions occur along closed curves.
It can be shown that the critical values of the reduced Hamiltonian determine the bifurcation diagram \eqref {x2_5}. For a segment of the bifurcation curve $(AB)$ (Fig.~\ref{fig1}), the motion on the plane $(u,v)$ occurs along a curve that is topologically structured as $\mathbb S^1\,\dot{\cup}\,\mathbb S^1\,\dot{\cup}\,\mathbb S^1$ (Fig.~2b)), and the integral critical surface is a trivial bundle over $\mathbb S^1$ with a layer $\mathbb S^1\,\dot{\cup}\,\mathbb S^1\,\dot{\cup}\,\mathbb S^1$.
\begin{figure}[!ht]
\centering
\includegraphics[width=1\textwidth]{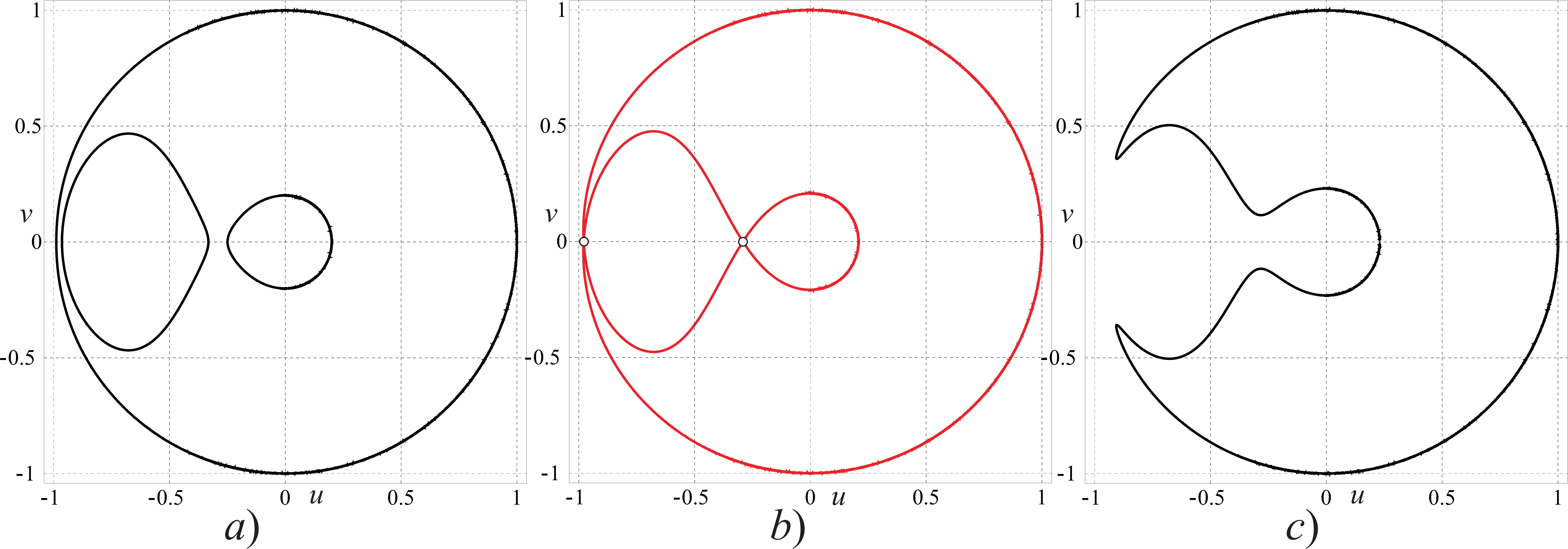}
\caption{Level curves of reduced Hamiltonian $H_1$ for $c>3$.}
\label{fig2}
\end{figure}

When passing through the section of the curve $(AB)$ of the bifurcation diagram $\Sigma$ for $c>3$ (Fig.~\ref{fig1}), the bifurcation of three tori into one $3\mathbb T^2 \to \mathbb S^1\times\left(\mathbb S^1\,\dot{\cup}\,\mathbb S^1\,\dot{\cup}\,\mathbb S^1\right)\to \mathbb T^2$. With the help of the level curves of the reduced Hamiltonian in Fig.~\ref{fig2}, this bifurcation is clearly demonstrated ($h_1=-4.5$ for a) $f_1=1.04$; b) $f_{2}=1.042957$; c) $f_3=1.05$).

After splitting the phase space into areas in which the number of tori remains unchanged, explicitly defining the bifurcation diagram and the bifur\-cations of Liouville tori, we can formulate the problem of classifying the absolute motions of vortices, as well as determining the topological type of three-dimensional isoenergy manifolds.

\bigskip
%% if you have acknowledgments to somebody (except for grants and foundations), please indicate this here
The authors thanks for valuable discussions to A.V. Borisov.

\end{document}